\documentclass[10pt]{article}
\usepackage[preprint]{tmlr}


\usepackage[utf8]{inputenc}
\usepackage{placeins}

\usepackage{amssymb,amsmath,mathtools,xfrac,array}
\usepackage{dsfont}
\usepackage{amsthm}
\begingroup
    \makeatletter
    \@for\theoremstyle:=definition,remark,plain\do{%
        \expandafter\g@addto@macro\csname th@\theoremstyle\endcsname{%
            \addtolength\thm@preskip\parskip
            }%
        }
\endgroup

\usepackage{booktabs}

\usepackage[hidelinks]{hyperref}

\numberwithin{equation}{section}
\newtheorem{theorem}{Theorem}[section]

\newtheorem{lemma}[theorem]{Lemma}

\theoremstyle{definition}

\theoremstyle{remark}


\DeclareMathOperator{\lcs}{lcs}
\DeclareMathOperator{\scs}{scs}

\title{The Harmonic Indel Distance}

\author{\name Bob Pepin \email bope@di.ku.dk \\
      \addr Department of Computer Science\\
      University of Copenhagen
}


\newcommand{\fix}{}
\newcommand{\new}{}


\usepackage[showzone=false,showseconds=false,useregional]{datetime2}
\DTMsettimestyle{iso}

\begin{document}
\maketitle
\begin{abstract}
This short note introduces the harmonic indel distance (HID), a new distance between strings where the cost of an insertion or deletion is inversely proportional to the string length. We present a closed-form formula and show that the HID is a proper distance metric. Then we perform an experimental comparison of HID to normalized and unnormalized versions of the indel distance on benchmark tasks for biomedical sequence data. We finally show planar embeddings of the benchmark datasets to provide some insights into the geometry of the metric spaces associated with the different distance metrics.
\end{abstract}

\section{Introduction and Setting}

In this paper, we introduce the harmonic indel distance (HID). The HID is a distance metric between strings
that is normalized in the sense that two long strings that differ by
a single symbol are closer to each other than two short strings
that differ by a single symbol. Our main technical contribution is Theorem~\ref{thm:distance} which proves the triangle inequality for HID and shows that it indeed defines a distance metric between strings. 

The HID $d(A, B)$ between two strings $A$ and $B$ is
defined by
\begin{equation}\label{eq:HID}
  d(A, B) = 2H_{|A|+|B|-|\lcs(A, B)|} - H_{|A|} - H_{|B|}
\end{equation}
where $|\cdot|$ denotes string length, $H_n = \sum_{i=1}^{n} 1/i$ is
the harmonic series and $\lcs(A, B)$ denotes the longest common
subsequence (LCS) of $A$ and $B$. We will also use the notation
$\scs(A, B)$ for the shortest common supersequence (SCS) of two strings. 

The paper is structured as follows: in the remainder of this section we provide some additional intuition on the definition of the HID, give an overview of related work and briefly comment on the computational complexity. Section \ref{sec:main} proves the triangle inequality. \fix{In section \ref{sec:experiments} we first show that HID can be applied to supervised machine learning tasks by two experiments that apply HID to a classification and a regression task. Next, we show that HID is applicable to unsupervised learning by giving an example of a data visualization task. We additionally compare HID to alternative string distances and show that it differs from alternative distances on some of the tasks.} Section~\ref{sec:discussion} wraps up the paper and provides additional suggestions for use cases of the HID.

In order to gain some intuition for the formula \eqref{eq:HID} we can rewrite it as
\begin{equation}\label{eq:HED2}
  d(A, B) = \sum_{i=|A|}^{|\scs(A, B)|}\frac1i + \sum_{j=|B|}^{|\scs(A, B)|}\frac1j
\end{equation}
using that $|\scs(A, B)| = |A| + |B| - |\lcs(A, B)|$. The interpretation is as follows: First we insert characters to transform $A$ into $\scs(A, B)$, where
the cost of each insertion is inversely proportional to the length of
the intermediate string ($i$ in the formula) at that step. Then we delete characters to
transform $\scs(A, B)$ into $B$, with the cost of a deletion again
being inversely proportional to the length of the intermediate string ($j$ in the formula)
on which it is performed.

Known dissimilarity measures between strings either do not take the length of the strings into account (e.g. Levenshtein distance and its variations) or fail the triangle inequality (e.g. Jaro–Winkler similarity, Sørensen–Dice similarity), see Chapter 11 in \citet{deza_encyclopedia_2013} for an extensive list. In general, the Steinhaus (or \emph{biotope}) transform \citep{deza_encyclopedia_2013} can be used to transform an unnormalized metric into a normalized metric while preserving the triangle inequality. We will compare the HID to a normalized distance based on the Steinhaus transform in the experiments in Section~\ref{sec:experiments} \fix{and show that it can give different results in some data visualization tasks}.

Regarding the computational complexity of computing the HID, observe that the formula \eqref{eq:HID} involves only the harmonic series and the length of the LCS and of the strings $A$ and $B$. The length of a string can be expressed as the length of the LCS of a string with itself. The harmonic series can be precomputed into a lookup table or approximated by the logarithm for large values. The computational complexity of HID is thus equal to the complexity of computing the LCS, which can be computed using dynamic programming algorithms with running time quadratic in the length of the strings \citep{abboud_tight_2015, bringmann_quadratic_2015}. Recently, algorithms have been developed that approximate the length of the LCS in linear time, which immediately yield linear time approximations of the HID \citep{bringmann_linear-time_2023}. Other approximations of LCS such as for streaming and spall-space settings \citep{cheng_streaming_2021} or differentiable approximation of the LCS \citep{yavuz_calcs_2018} immediately yield approximations of the HID with the same properties.

\section{Background and related work}
\new{
The idea behind edit distances it to define the distance between two strings $A$ and $B$ to be the total cost of transforming $A$ into $B$ through a sequence of operations such as insertions, deletions and substitutions of characters. In this work, we consider so-called indel string distances, which are a particular case of edit distances where the possible operations are restricted to insertions and deletions (indels). In particular, a character substitution corresponds to a deletion followed by an insertion.
}

The most fundamental indel string distance is simply known as the indel distance (ID), see \citet{deza_encyclopedia_2013}. Like the HID, the ID quantifies the total cost of transforming a string $A$ into a string $B$ using the operations of inserting and deleting characters. The cost of each operation for ID is 1, whereas for HID the cost is inversely proportional to the length of the intermediate string. Comparing HID and ID therefore allows to isolate the impact of normalizing the cost. The ID can be computed from the LCS using the formula
\begin{equation*}
    d_{\textrm{ID}}(A, B) = |A| + |B| - 2 |\lcs(A, B)|.
\end{equation*}

The Steinhaus transform (\emph{biotope transform} in \citet{deza_encyclopedia_2013}) is an alternative way of normalizing string distances. The STID is defined by
\begin{equation*}
    d_{\textrm{STID}}(A, B) = \frac{2 d_{\textrm{ID}}(A, B)}{|A| + |B| + d_{\textrm{ID}}(A, B)}.
\end{equation*}
Denoting $\emptyset$ the empty string, note that $d_{\textrm{STID}}(A, \emptyset) = 1$ for any string $A$ (since $d_{\textrm{ID}}(A, \emptyset) = |A|$) and that $d_{\textrm{STID}}(A, B) \leq 1$ for any $A$, $B$. This shows in particular that $d_{\textrm{STID}}$ embeds the space of strings of arbitrary length into a sphere of radius $1$. The STID is normalized in the sense that if $A$ is a subsequence of $B$ then $d_{\textrm{ID}}(A, B) = |B| - |A|$ and $d_{\textrm{STID}}(A, B) = \frac{|B| - |A|}{|B|}$. \new{In contrast for the HID we have $d_{\textrm{STID}}(A, \emptyset) = H_{|A|}$ so that the space of all strings equipped with the STID distance has infinite radius.}

The HID is inspired by the contextualized normalized edit distance from
\citet{de_la_higuera_contextual_2008} which requires the computation of shortest paths \new{over all possible edit operations, implemented using a custom dynamic programming algorithm. The contextualized normalized edit distance is then the sum of the costs over the shortest path, where the cost at each step in the path is normalized by the inverse string length at that step}.

\new{From \eqref{eq:HED2} it is clear that the HID is identical to the contextualized normalized edit distance restricted to insertions and deletions (there is only a single shortest path). Our closed-form formula allows us to reduce the computational complexity from cubic to quadratic, thereby answering an open question posed in \citet{de_la_higuera_contextual_2008}. The LCS-based formulation also permits the use of existing libraries for computing LCS, available in all major programming languages, and to leverage efficiency improvements made in the algorithms community as described in the next paragraph.} We refer to~\citet{de_la_higuera_contextual_2008} for an overview of the related literature up to 2008. The only subsequent major development known to the authors is the family of
extended edit distances developed in~\citet{fuad_parameter-free_2014}. The basic idea behind the extended edit distances is the addition of a penalty term \fix{to an edit distance} which needs to be tuned to specific problem instances, in contrast to the parameter-free definition of the HID.

\section{Proof of main result}\label{sec:main}

\begin{theorem}\label{thm:distance}
  The harmonic indel distance
  \begin{equation*}
    d(A, B) := 2H_{|A|+|B|-|\lcs(A, B)|} - H_{|A|} - H_{|B|}
  \end{equation*}
  defines a distance on the space of
  strings. For any three strings $A, B, C$ it satisfies the distance axioms
  \begin{itemize}
  \item $d(A, B) = d(B, A),$
  \item $d(A, B) = 0 \iff A = B$ and
  \item $d(A, C) \leq d(A, B) + d(B, C)$.
  \end{itemize}
\end{theorem}

Before proving the theorem, we will start with three preliminary lemmas. In the proofs of the lemmas, we make extensive use of the observation that if $A$
is a subsequence of $B$, then $d(A, B) = H_{|B|} - H_{|A|}$ which follows immediately from the definition.

\begin{lemma}\label{lemma:scs}
  For any two strings $A$ and $B$
  \begin{equation*}
    d(A, B) = d(A, \scs(A, B)) + d(\scs(A, B), B).
  \end{equation*}
\end{lemma}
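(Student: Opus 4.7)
The plan is to reduce both sides to a common expression by using two key facts: the classical identity $|\scs(A, B)| = |A| + |B| - |\lcs(A, B)|$, and the observation highlighted at the start of the section that $d(X, Y) = H_{|Y|} - H_{|X|}$ whenever $X$ is a subsequence of $Y$.

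First I would apply that observation to the two terms on the right-hand side. Since $A$ is a subsequence of $\scs(A, B)$, we get $d(A, \scs(A, B)) = H_{|\scs(A, B)|} - H_{|A|}$, and analogously $d(\scs(A, B), B) = H_{|\scs(A, B)|} - H_{|B|}$. Adding yields
\begin{equation*}
    d(A, \scs(A, B)) + d(\scs(A, B), B) = 2 H_{|\scs(A, B)|} - H_{|A|} - H_{|B|}.
\end{equation*}

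Next I would invoke the length identity for the shortest common supersequence, $|\scs(A, B)| = |A| + |B| - |\lcs(A, B)|$, so that the sum rewrites as $2 H_{|A| + |B| - |\lcs(A, B)|} - H_{|A|} - H_{|B|}$, which is exactly the definition of $d(A, B)$.

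I do not expect real obstacles here; the only non-routine ingredient is the $\scs$-length identity, which is standard and follows from the fact that an optimal common supersequence is obtained by aligning the two strings along an $\lcs$ and keeping each non-matched symbol exactly once. If the paper prefers a fully self-contained treatment, I would include a one-line justification of that identity before applying it.
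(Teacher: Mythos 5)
Your proof is correct and follows the same route as the paper: apply the subsequence observation $d(X,Y)=H_{|Y|}-H_{|X|}$ to both terms on the right-hand side and then use the identity $|\scs(A,B)|=|A|+|B|-|\lcs(A,B)|$ to match the definition of $d(A,B)$. No issues.
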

\begin{proof}
  First note that, if $A$ is a subsequence of $B$, then $d(A, B) =
  H_{|B|} - H_{|A|}$. Now, since $|\scs(A, B)| = |A|+|B|-|\lcs(A, B)|$ we have
  \begin{equation*}
    d(A, \scs(A, B)) + d(\scs(A, B), B) = H_{|\scs(A, B)|} - H_{|A|} +
    H_{|\scs(A, B)|} - H_{|B|} = d(A, B).
  \end{equation*}
\end{proof}

\begin{lemma}\label{lemma:subsequences}
  For any three strings $A, B, C$ such that $A$ is a subsequence of
  $B$ and $B$ is a subsequence of $C$ we have
  \begin{equation*}
    d(A, C) = d(A, B) + d(B, C).
  \end{equation*}
\end{lemma}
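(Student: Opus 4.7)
The plan is to apply the observation stated at the start of the section, namely that whenever one string is a subsequence of another, the harmonic edit distance reduces to a difference of harmonic numbers. The assumption gives a chain $A \subseteq B \subseteq C$ of subsequences, and subsequence containment is transitive, so I also get that $A$ is a subsequence of $C$. This means I can rewrite all three distances $d(A,B)$, $d(B,C)$, and $d(A,C)$ as differences of harmonic numbers of the respective string lengths.

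Concretely, I would first invoke the observation to write
\begin{equation*}
  d(A, B) = H_{|B|} - H_{|A|}, \qquad d(B, C) = H_{|C|} - H_{|B|}, \qquad d(A, C) = H_{|C|} - H_{|A|}.
\end{equation*}
Then the identity $d(A, C) = d(A, B) + d(B, C)$ follows by a one-line telescoping sum in which the $H_{|B|}$ terms cancel.

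The only thing to justify slightly more carefully is the transitivity step, i.e.\ that $A$ being a subsequence of $B$ and $B$ of $C$ implies $A$ is a subsequence of $C$; this is immediate from the definition of a subsequence (compose the two index embeddings). Because the observation at the top of the section already handles the non-trivial content (it is itself a short consequence of the $\lcs$-based formula, since $\lcs(X,Y) = X$ when $X$ is a subsequence of $Y$), no genuine obstacle arises. I expect the proof to be essentially a two-line calculation; the main thing to watch is simply to state the transitivity step explicitly so the third equation $d(A,C) = H_{|C|} - H_{|A|}$ is justified rather than asserted.
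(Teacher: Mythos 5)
Your proof is correct and is essentially the paper's own argument: both reduce each distance to a difference of harmonic numbers via the subsequence observation and let the $H_{|B|}$ terms telescope. Your explicit mention of transitivity (needed to justify $d(A,C)=H_{|C|}-H_{|A|}$) is a small point the paper leaves implicit, but the approach is the same.
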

\begin{proof}
  By the subsequence relations in the assumption the distances
  simplify and we get
  \begin{equation*}
    d(A, B) + d(B, C) = H_{|B|} - H_{|A|} + H_{|C|} - H_{|B|} =
    H_{|C|} - H_{|A|} = d(A, C).
  \end{equation*}
\end{proof}

\begin{lemma}\label{lemma:trianglelcs}
  For any two strings $A$ and $B$ we have
  \begin{equation*}
    d(A, B) \leq d(A, \lcs(A, B)) + d(\lcs(A, B), B).
  \end{equation*}
\end{lemma}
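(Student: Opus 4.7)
The plan is to reduce the inequality to a purely numerical statement about harmonic numbers and then verify it by pairing up terms. Let $L = \lcs(A,B)$. Since $L$ is a subsequence of both $A$ and $B$, the observation emphasized at the start of the section gives
\begin{equation*}
  d(A, L) = H_{|A|} - H_{|L|}, \qquad d(L, B) = H_{|B|} - H_{|L|},
\end{equation*}
so the right-hand side simplifies to $H_{|A|} + H_{|B|} - 2H_{|L|}$. Substituting the definition of $d(A,B)$ and canceling the $-H_{|A|}-H_{|B|}$ on both sides, the target inequality becomes
\begin{equation*}
  H_{|A|+|B|-|L|} + H_{|L|} \leq H_{|A|} + H_{|B|}.
\end{equation*}

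Writing $a = |A|$, $b = |B|$, $\ell = |L|$ and assuming without loss of generality that $a \leq b$ (the statement is symmetric in $A$ and $B$), I would rearrange this as
\begin{equation*}
  H_{a+b-\ell} - H_b \leq H_a - H_\ell,
\end{equation*}
i.e.\ $\sum_{j=1}^{a-\ell} \frac{1}{b+j} \leq \sum_{j=1}^{a-\ell} \frac{1}{\ell+j}$. Both sums have exactly $a - \ell \geq 0$ terms, and since $b \geq \ell$ (because $L$ is a subsequence of $B$), we have $b + j \geq \ell + j$ and hence $\frac{1}{b+j} \leq \frac{1}{\ell+j}$ for each $j$. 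Summing over $j$ yields the inequality.

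The proof has essentially no obstacle: the only substantive content is the elementary monotonicity of the map $n \mapsto H_{n+k} - H_n$, which is decreasing in $n$ for any fixed $k \geq 0$. Everything else is bookkeeping with the subsequence simplification from Section~2.
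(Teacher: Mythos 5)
Your proof is correct and takes essentially the same route as the paper's: after the subsequence simplification, both reduce the lemma to comparing two blocks of $|B|-|\lcs(A,B)|$ (in your labelling, $a-\ell$) consecutive harmonic terms, you by the termwise bound $\frac{1}{b+j}\le\frac{1}{\ell+j}$, the paper by the uniform bounds $1/|A|$ and $1/|B|$ under its without-loss-of-generality assumption. Your termwise comparison is a minor streamlining (it does not even need the WLOG on $|A|,|B|$), but the substance of the argument is the same.
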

\begin{proof}
  By symmetry, we can assume without loss of generality that $|A| \geq
  |B|$.
  Now, by a direct computation
  \begin{align*}
    \lefteqn{d(A, B) - d(A, \lcs(A, B)) - d(\lcs(A, B), B)}\quad \\
    &= 2H_{|A|+|B|-|\lcs(A, B)|} - H_{|A|} - H_{|B|} - (H_{|A|} -
      H_{|\lcs(A, B)|}) - (H_{|B|} - H_{|\lcs(A, B)|}) \\
    &= 2(H_{|A|+|B|-|\lcs(A, B)|} - H_{|A|}) - 2(H_{|B|} - H_{|\lcs(A, B)|}).
  \end{align*}
  The first expression in parentheses is a sum of $|B|-|\lcs(A, B)|$
  terms, all of which are less or equal than $1/|A|$ so that 
  \begin{equation*}
    H_{|A|+|B|-|\lcs(A, B)|} - H_{|A|}
    = \sum_{i=|A|}^{|A|+|B|-|\lcs(A, B)|} 1/i
    \leq \frac{|B|-|\lcs(A, B)|}{|A|}.
  \end{equation*}
  The second expression in parentheses above is also a sum of $|B|-|\lcs(A, B)|$
  terms, all of which are greater or equal than $1/|B|$, and by our
  assumption $1/|B| \geq 1/|A|$. Therefore
  \begin{equation*}
    H_{|B|} - H_{|\lcs(A, B)|}
    = \sum_{i=|\lcs(A, B)|}^{|B|} 1/i
    \geq \frac{|B|-|\lcs(A, B)|}{|B|}
    \geq \frac{|B|-|\lcs(A, B)|}{|A|}.
  \end{equation*}
  This shows that
  \begin{equation*}
    d(A, B) - d(A, \lcs(A, B)) - d(\lcs(A, B), B) \leq 0
  \end{equation*}
  which is the conclusion.
\end{proof}

\begin{proof}[Proof of Theorem \ref{thm:distance}]
  The symmetry $d(A, B) = d(B, A)$ follows trivially from the
  definition. So does the ``if'' direction of the second property.
  For
  the only if direction, note that $d(A, B)$ can be decomposed into a
  sum of two positive terms
  \begin{equation*}
    d(A, B) = (H_{|\scs(A, B)|} - H_{|A|}) + (H_{|\scs(A, B)|} - H_{|B|})
  \end{equation*}
  so that $d(A, B) = 0$ implies $|\scs(A, B)| = |A| = |B|$ and $A
  = B$.
  To prove the triangle inequality, we use the notation and property $\scs(A, B, C) = \scs(A, \scs(B, C)) = \scs(\scs(A, B), C)$ and apply in turn
  Lemmas~\ref{lemma:scs}, \ref{lemma:trianglelcs},
  \ref{lemma:subsequences} twice and \ref{lemma:scs} again to obtain
  \begin{align*}
    \lefteqn{d(A, B) + d(B, C)}\quad \\
    &= d(A, \scs(A, B)) + d(\scs(A, B), B) + d(B, \scs(B, C)) +
      d(\scs(B, C), C) \\
    &\geq d(A, \scs(A, B)) + d(\scs(A, B), \scs(A, B, C)) + d(\scs(A,
      B, C), \scs(B, C)) + d(\scs(B, C),
      C) \\
    &= d(A, \scs(A, B, C)) + d(\scs(A, B, C), C) \\
    &\geq d(A, \scs(A, C)) + d(\scs(A, C), C) \\
    &= d(A, C).
  \end{align*}
\end{proof}

\section{Experiments}\label{sec:experiments}

The purpose of the experiments in this section is to compare the HID to other string distances when applied to machine learning tasks: the indel distance (ID) and the Steinhaus transform indel distance (STID). \new{Note that we do not compare against the contextualized normalized edit distance since it is identical to the HID if we restrict the operations to insertions and deletions as we do in this paper. In addition, the cubic complexity of the normalized edit distance would be prohibitive for our experiments.}

We perform experiments on two benchmark tasks for biological sequence regression and classification respectively. We also present planar embeddings for each dataset using t-SNE to gain some insight into the geometries of the spaces associated to the different distances. Table \ref{table:datasets} presents statistics on the datasets used. As our main goal is to evaluate the differences between HID, ID and STID, we do not aim to beat state-of-the-art deep learning models but we do include standard baselines to put our results into perspective. \new{We do not claim either that HID is inherently superior to any of the other distances. Each distance embeds the data in a metric space with a different geometry, and the most appropriate geometry depends on the task at hand. Our experiments do show that the different metrics considered result in differences in performance for some but not all tasks and that normalization by string length is beneficial for the two supervised learning tasks considered.}

All benchmark experiments used support vector machines with radial basis function kernels based on the string metrics described above. The SVM margin as well as the RBF variance hyperparameters were optimized  using the Tree-structured Parzen Estimator algorithm implemented in the Optuna software \citep{akiba_optuna_2019}. We used the SVM implementation from Scikit-Learn \citep{pedregosa_scikit-learn_2011}. The hyperparameters used are included in Table \ref{table:hyperparameters} in the appendix.

\begin{table}[htbp]
\caption{Dataset statistics}
\label{table:datasets}
\begin{center}
\begin{tabular}{l rrr rrr}
\toprule
& \multicolumn{3}{c}{\textbf{Number of Sequences}} 
& \multicolumn{3}{c}{\textbf{Sequence Length}} \\
\cmidrule(lr){2-4}\cmidrule(lr){5-7}
\textbf{Dataset} & \textbf{Training} & \textbf{Validation} & \textbf{Test} & \textbf{Min.} & \textbf{Max.} & \textbf{Median} \\
\midrule
ncRNA 
& 25371 & 6430 & 13646 
& 42 & 500 & 123 \\
FLIP Mixed 
& 22335 & 2482 & 3134
& 20 & 35213 & 413
\\
FLIP Human 
& 7287 & 861 & 1945
& 39 & 34350 & 477.0 \\
FLIP Human-Cell 
& 5149 & 643 & 1366
& 44 & 34350 & 469.0 \\
\bottomrule
\end{tabular}
\end{center}
\end{table}

\subsection{Classification}
The classification task involves the classification of sequences of non-coding RNA according to their type and uses the \emph{Dataset2} dataset from the benchmark paper \citet{creux_comparison_2024}. This dataset was chosen because non-coding RNA are some of the shortest biological sequences, and we expect the benefit of the normalization in HID to be higher for shorter sequences. The dataset provides training and test splits, and a validation set was generated by random splitting of the provided training set.

The results are shown in Table \ref{table:ncRNA}. We see that the SVMs with HID and STID kernels are competitive with the strongest baseline, which uses a recurrent neural network architecture, whereas the SVM with ID kernel underperforms 4 out of the 6 baselines whereas .

\begin{table}[htbp]
\caption{Classification accuracy on ncRNA benchmark}
\label{table:ncRNA}
\begin{center}
\begin{tabular}{l r}
\toprule
\textbf{Model} & \textbf{Accuracy (\%)} \\
\midrule
SVM (HID) & \textbf{97.3} \\
SVM (STID) & 97.0 \\
SVM (ID) & 90.6 \\
\addlinespace
ncrna-deep & 97.1 \\
MFPred & 96.5 \\
RNAGCN & 94.7 \\
NCYPred & 91.6 \\
nRC & 78.3 \\
ncRDense & 73.5 \\
\bottomrule
\end{tabular}

Baseline values are taken from \citet{creux_comparison_2024}
\end{center}
\end{table}

\subsection{Regression}

We evaluate the regression performance on the thermostability prediction task from the FLIP benchmark for protein sequences \citep{dallago_flip_2021}. This is a challenging benchmark which includes a carefully selected train-validation-test split based on biological considerations. The metric adopted by the benchmark is the Spearman correlation coefficient. The baselines include language models pre-trained on a large corpus of sequence data (ESM), the same models trained only on the benchmark training set (ESM-untrained) as well as a CNN and a ridge regression model. 

Our results are presented together with the baselines in Table \ref{table:flip}. The SVMs using normalized distances HID and STID are competitive with all baselines that have not been pre-trained on a large corpus of external sequence data, whereas the unnormalized ID SVM only outperforms the weakest baseline.

\begin{table}[htbp]
\caption{Spearman correlation coefficient on FLIP thermostability prediction task}
\label{table:flip}
\begin{center}
\begin{tabular}{l r r r}
\toprule
\textbf{Model} & \textbf{Mixed} & \textbf{Human} & \textbf{Human-Cell} \\
\midrule
\multicolumn{4}{l}{\emph{Models pretrained on large corpus}} \\
ESM-1b (per AA) & \textbf{0.68} & 0.71 & 0.76 \\
ESM-1v (per AA) & 0.65 & \textbf{0.77} & \textbf{0.78} \\
\midrule
\multicolumn{4}{l}{\emph{Models without pretraining}} \\
SVM (HID) & 0.42 & \textbf{0.59} & 0.56 \\
SVM (STID) & 0.40 & \textbf{0.59} & \textbf{0.57} \\
SVM (ID) & 0.20 & 0.39 & 0.40 \\
\addlinespace
ESM-untrained (per AA) & \textbf{0.44} & 0.44 & 0.46 \\
ESM-untrained (mean) & 0.36 & 0.48 & 0.49 \\
CNN & 0.34 & 0.50 & 0.49 \\
Ridge & 0.17 & 0.15 & 0.24 \\
\bottomrule
\end{tabular} 
\\
Baseline values are taken from \citet{dallago_flip_2021}
\end{center}
\end{table}

\subsection{Metric Embedding}
To give some more insights into the different geometries entailed by the HID, STID and ID, we provide t-SNE plots of the training datasets for ncRNA (Figure \ref{fig:tsne-ncRNA}) and FLIP (Figure \ref{fig:tsne-flip-human-cell}). The objective of t-SNE is to embed a dataset into the plane while keeping the distances in the embedding as consistent as possible with the distances in the original space. Each of the distance metrics  defines a metric space of strings, and we expect the t-SNE embedding to reflect as much as possible the geometry of the dataset, viewed as points in the string space entailed by the respective distance metric.

For the ncRNA dataset, a visual inspection of Figure \ref{fig:tsne-ncRNA} suggests that all distances result in a similar geometry, with the normalized distances leading to a slightly sharper separation for example between red and green classes. 

On the other hand, for the FLIP datasets (Figure \ref{fig:tsne-flip-human-cell}) the different distances lead to clearly different embeddings. The HID leads to a dataset geometry that can be embedded as a crescent shape, with lower-valued points concentrating in one end and higher-valued points concentrating in the other end, especially for the Human and Human-Cell datasets. 

The embedding for STID still shows concentration of high- and low-value points in different regions of the embedding space and recovers the same local structures as the HID embedding. However, it does not show any non-trivial global structures, which might be due to the geometry of the STID metric space being less compatible with Euclidean plane geometry than the HID metric space (recall that the STID space is a sphere of radius $1$ where all elements are at distance at most $1$ from each other). This interpretation is supported by the STID embedding having significantly higher KL divergence than the HID embedding in the t-SNE objective (Table~\ref{table:tsne} in the appendix).

Finally, the ID produces a radially symmetric embedding with regularly spaced patterns that do not show any obvious relation to the target value. The regular spacing could be caused by the large number of points that each are at the same integer distance from each other.

\begin{figure}[htbp]
\centering 
\begin{tabular}{ccc}
HID & STID & ID \\
\includegraphics[width=.3\linewidth]{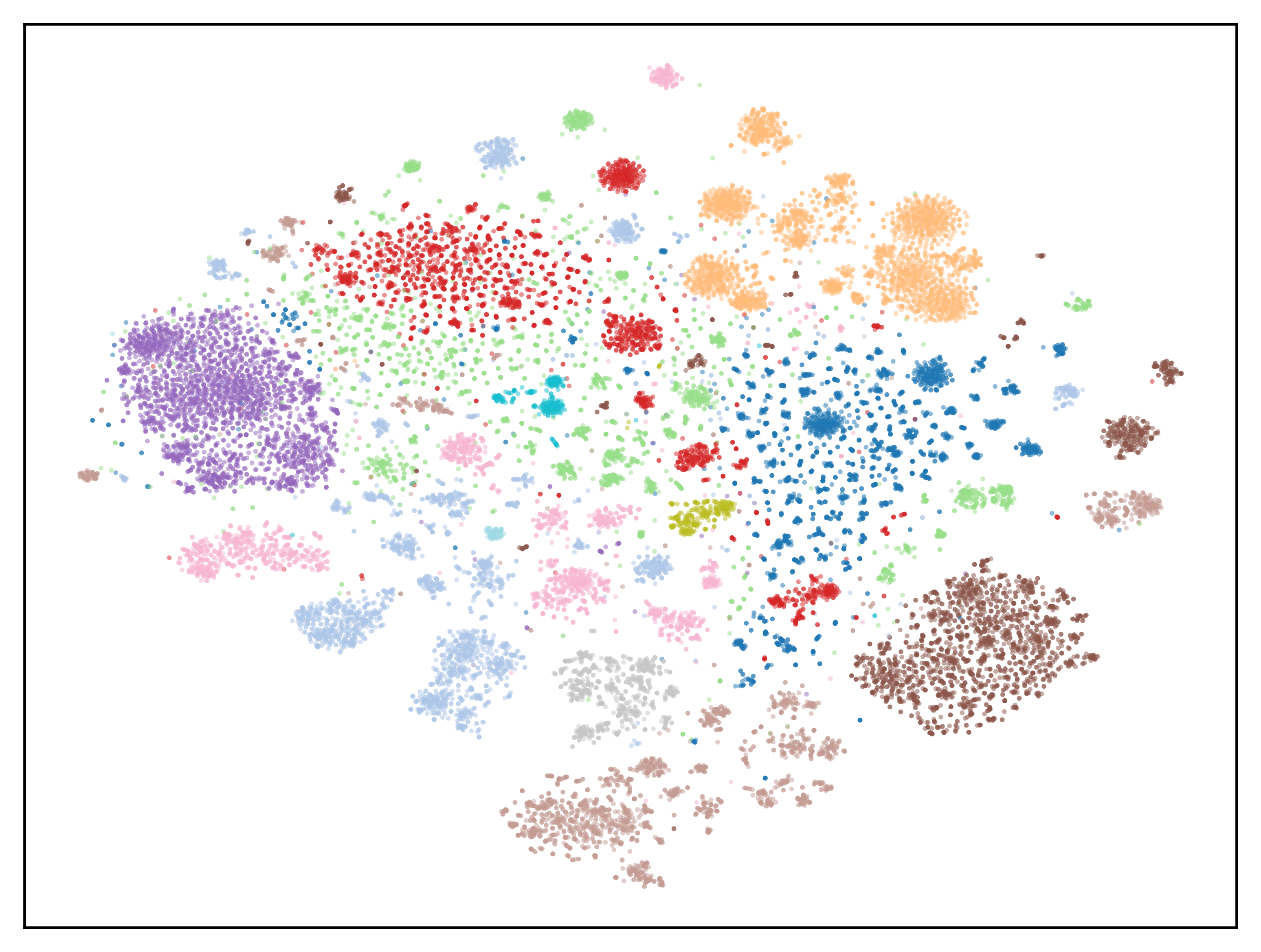} &
\includegraphics[width=.3\linewidth]{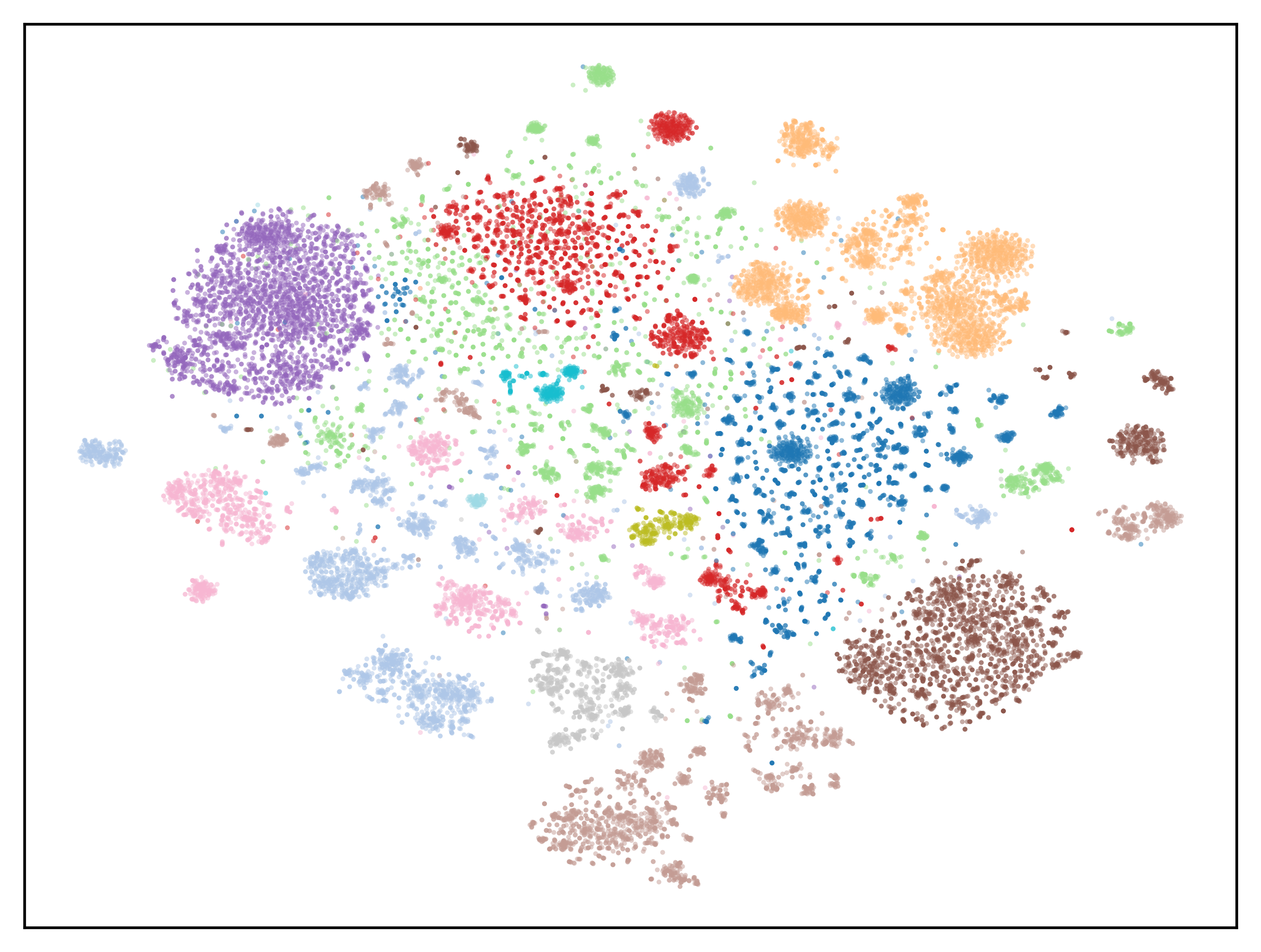} &
\includegraphics[width=.3\linewidth]{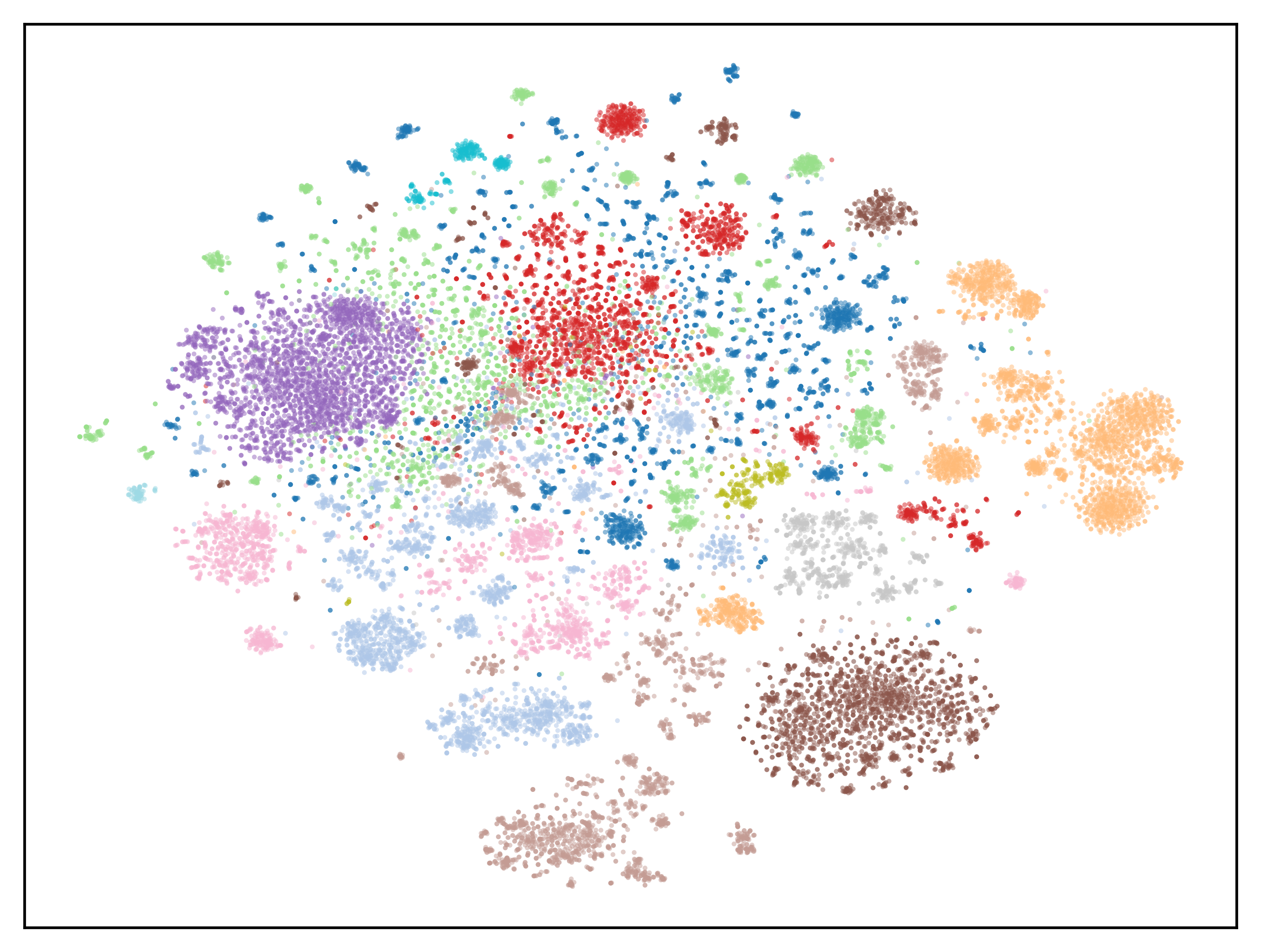}
\end{tabular}
\caption{T-SNE plots of ncRNA training dataset using different distance metrics. Colors correspond to different classes. All metrics recover the clusters present in the data with HID and STID obtaining a slightly better separation.}
\label{fig:tsne-ncRNA}
\end{figure}

\begin{figure}[htbp]
\centering 
\begin{tabular}{lccc}
& HID & STID & ID \\
\parbox[b][.2\linewidth][c]{.1cm}{\rotatebox[origin=c]{90}{Mixed}} &
\includegraphics[width=.3\linewidth]{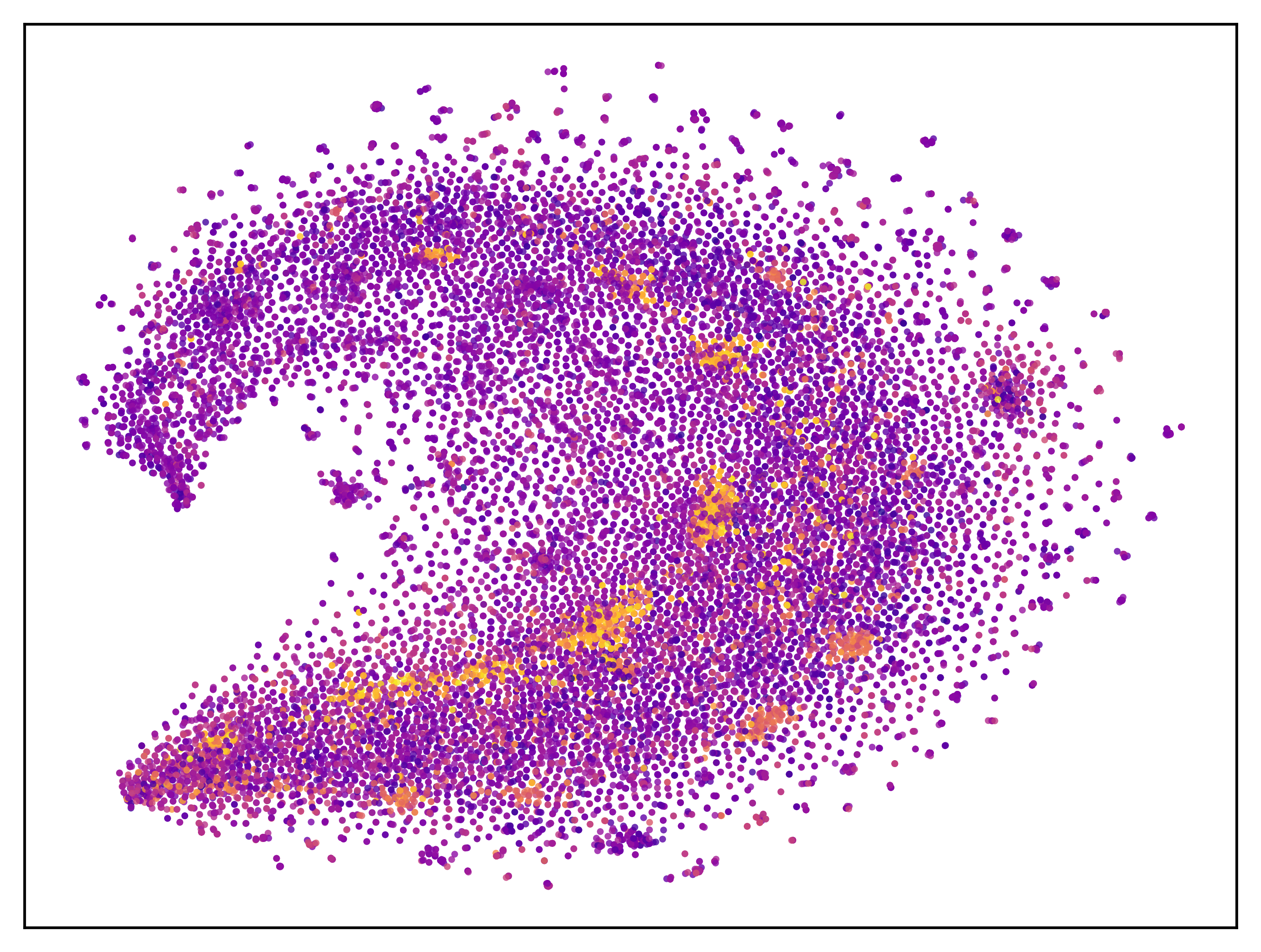} &
\includegraphics[width=.3\linewidth]{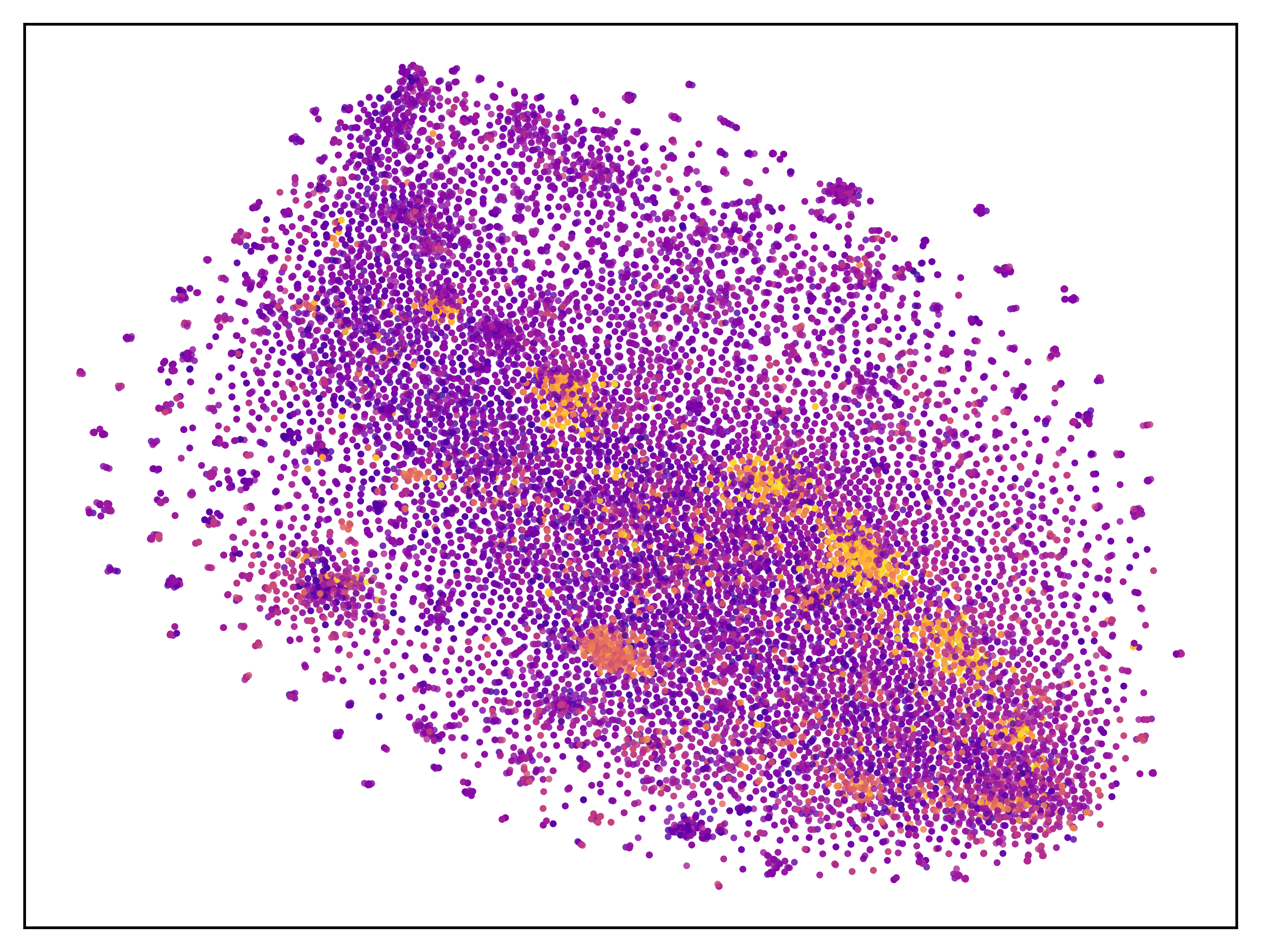} &
\includegraphics[width=.3\linewidth]{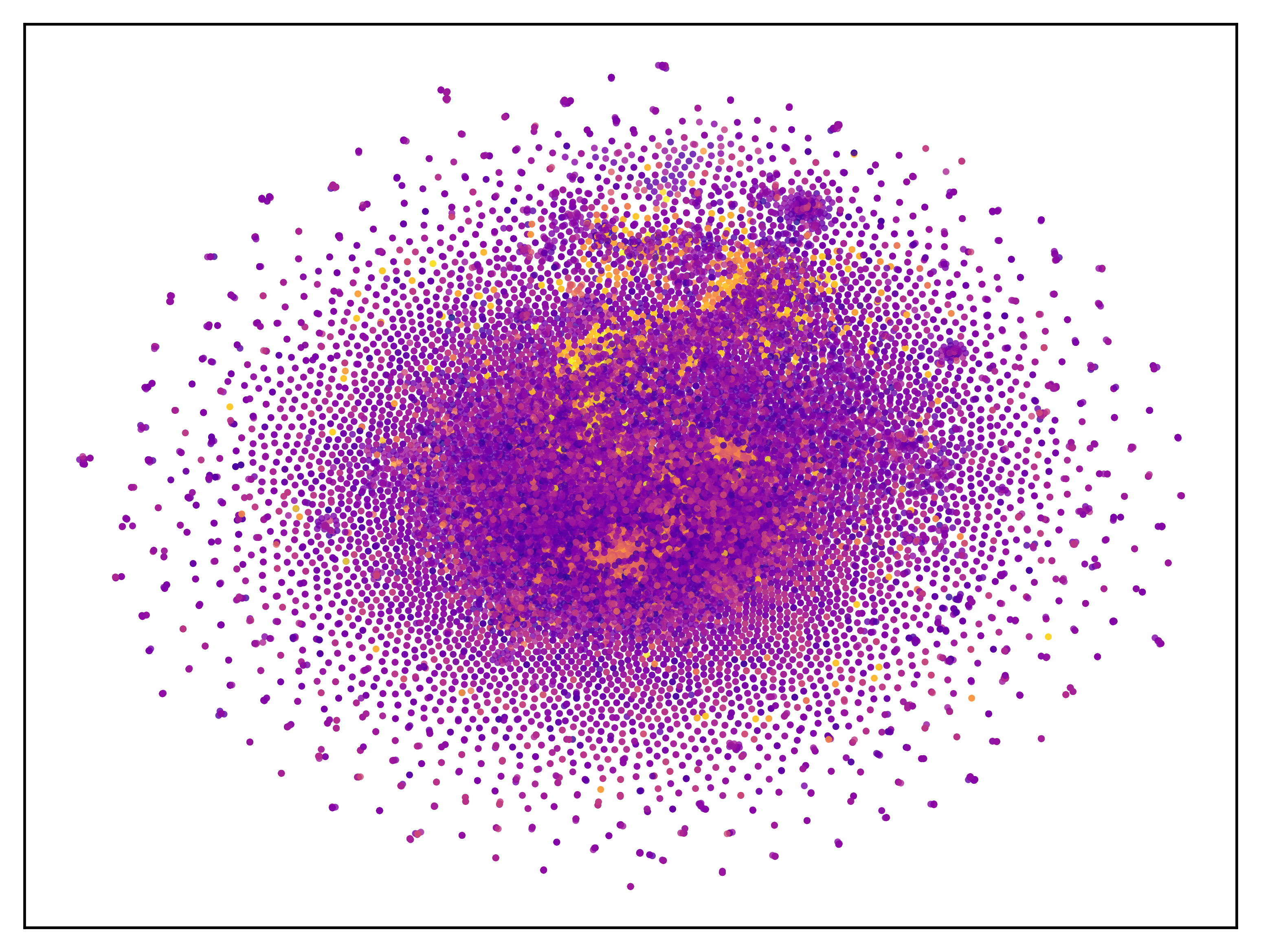} \\
\parbox[b][.2\linewidth][c]{.1cm}{\rotatebox[origin=c]{90}{Human}} &
\includegraphics[width=.3\linewidth]{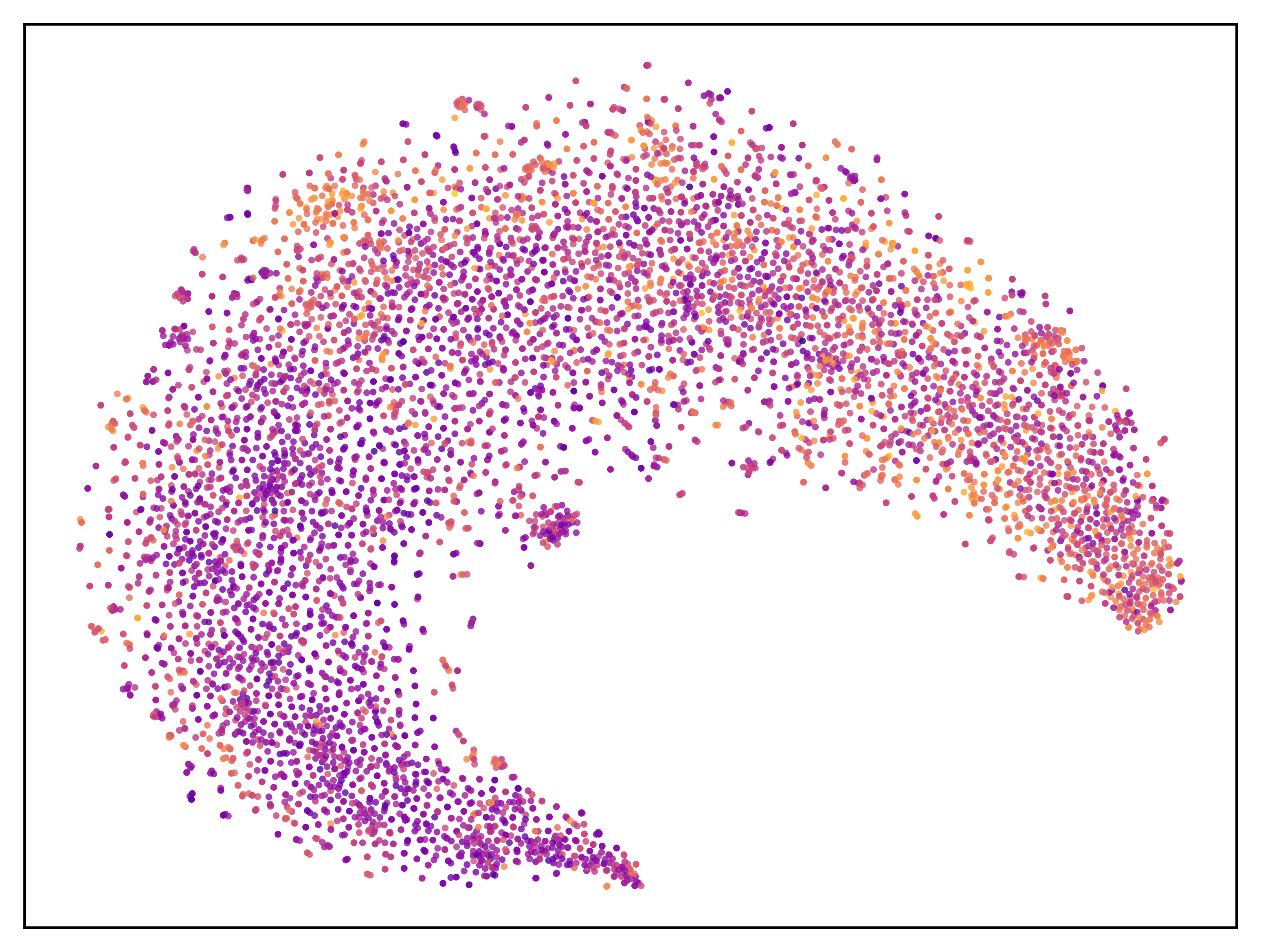} &
\includegraphics[width=.3\linewidth]{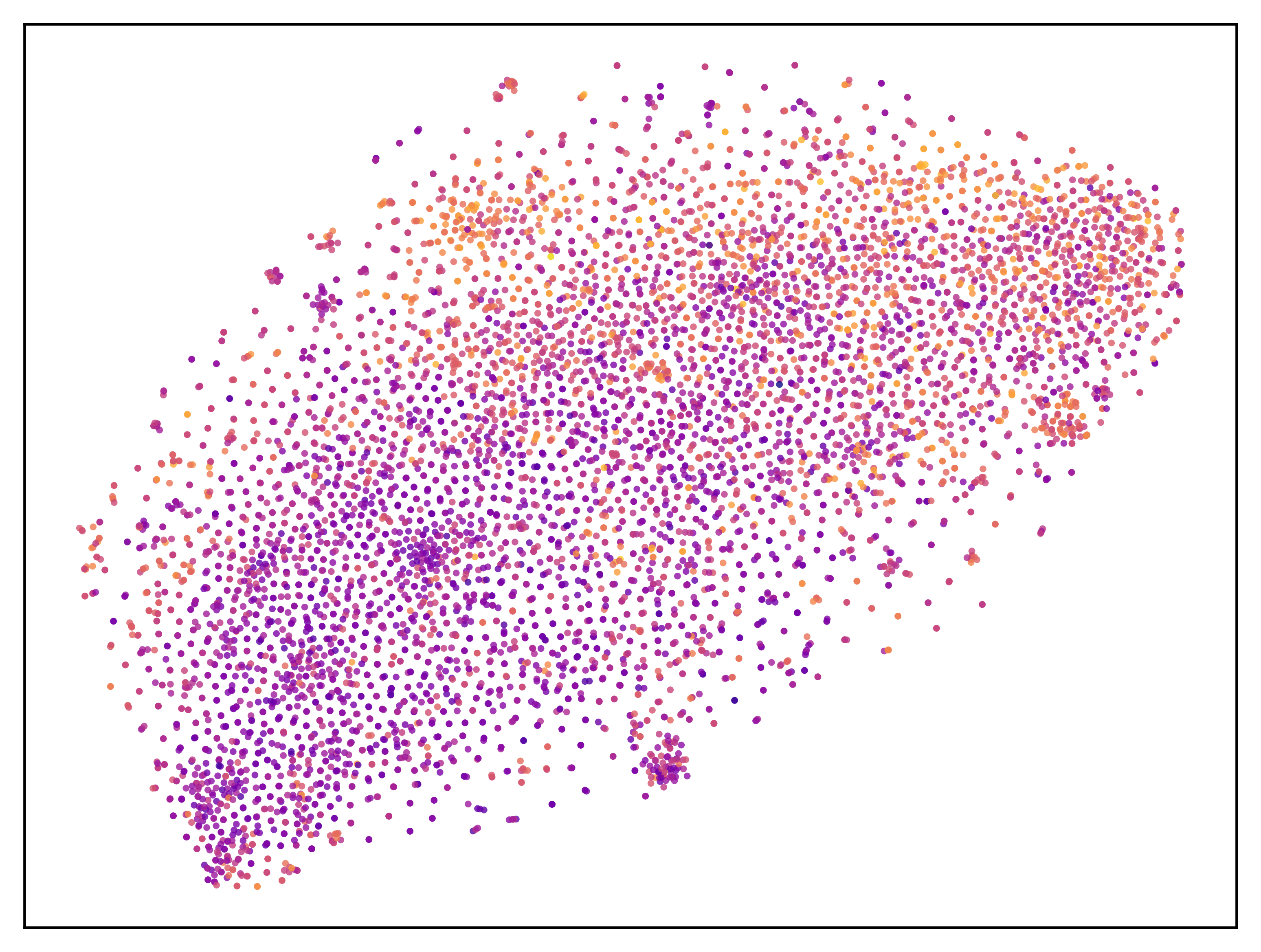} &
\includegraphics[width=.3\linewidth]{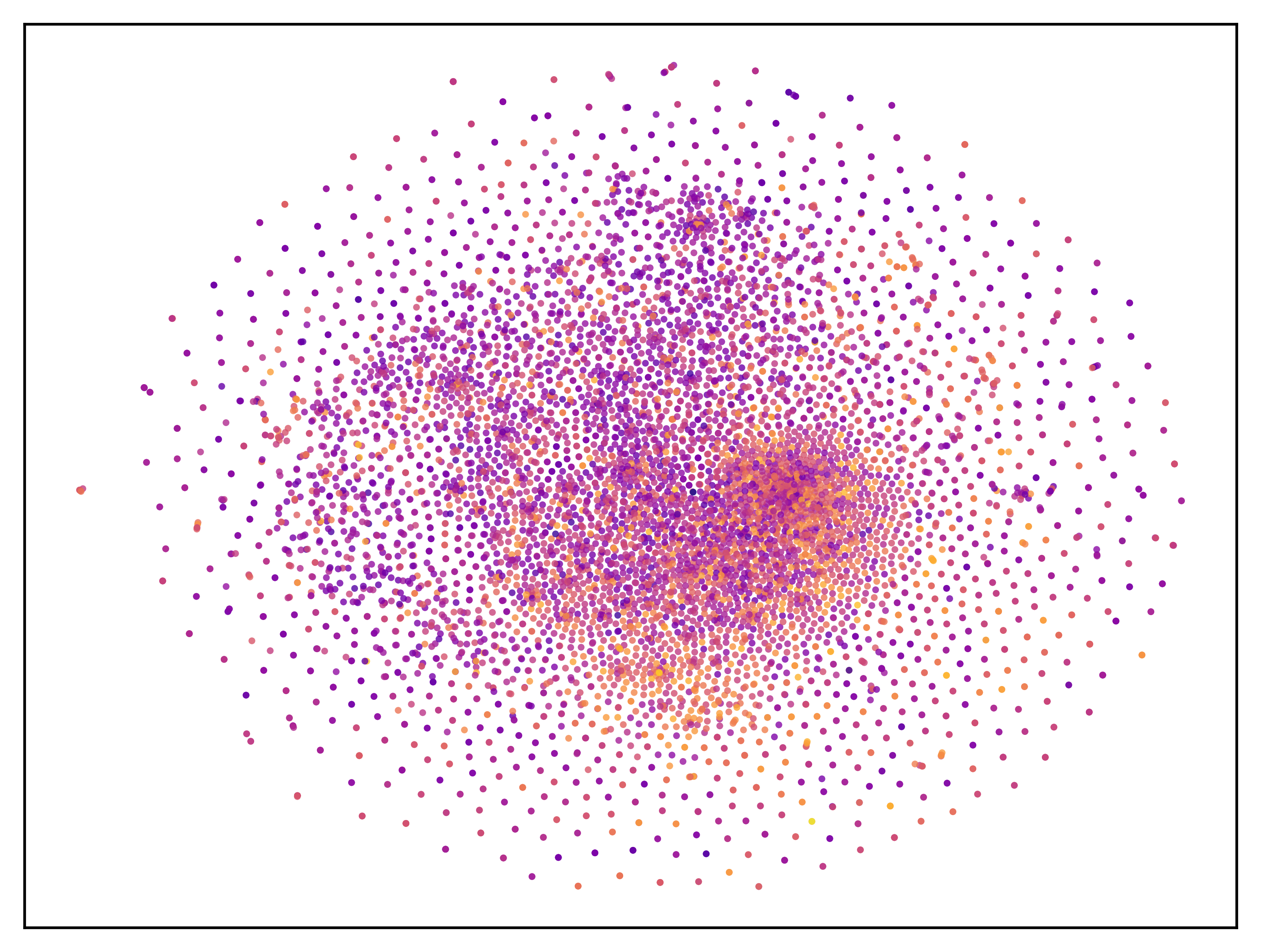} \\
\parbox[b][.2\linewidth][c]{.1cm}{\rotatebox[origin=c]{90}{Human-Cell}} &
\includegraphics[width=.3\linewidth]{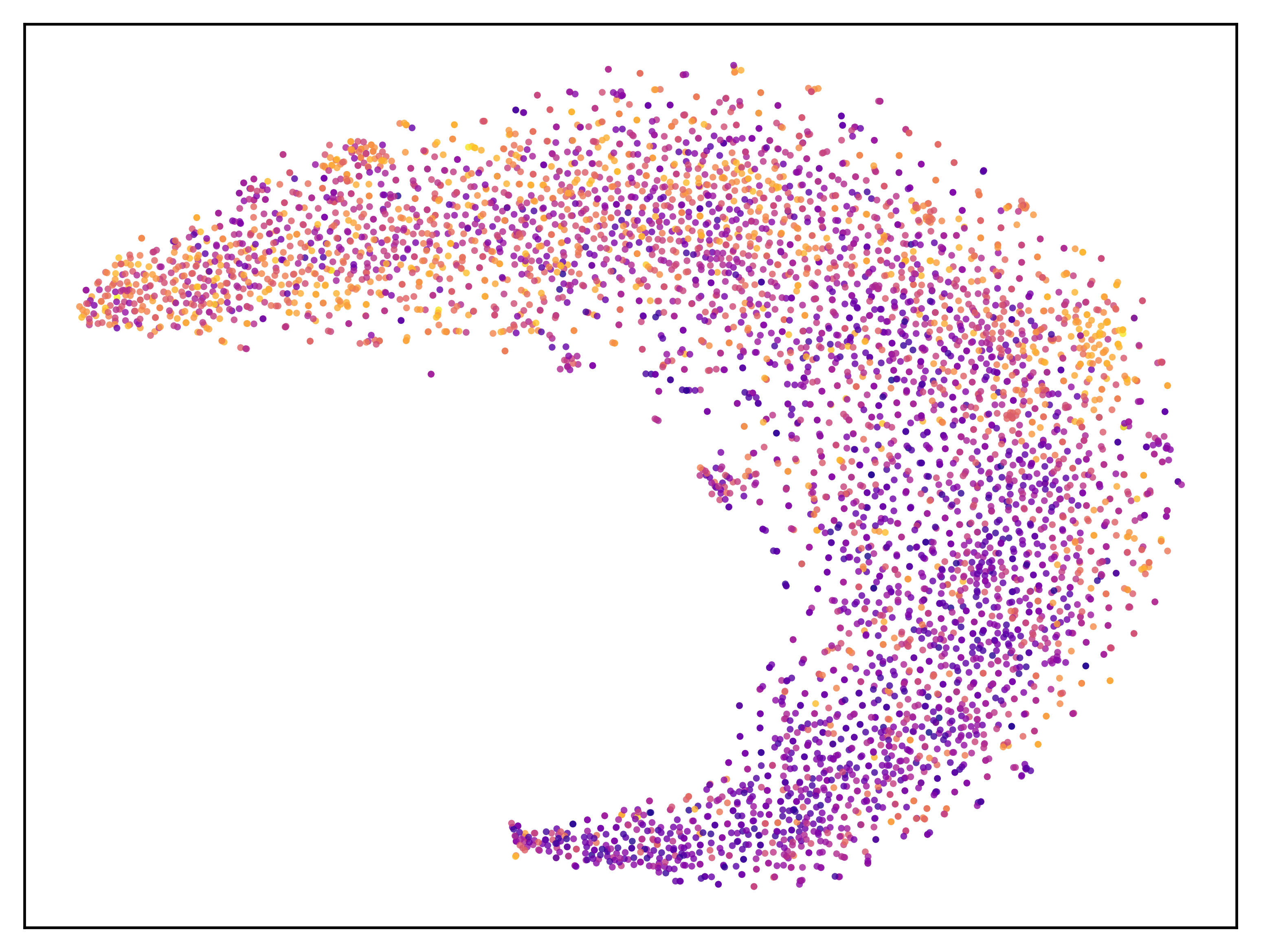} &
\includegraphics[width=.3\linewidth]{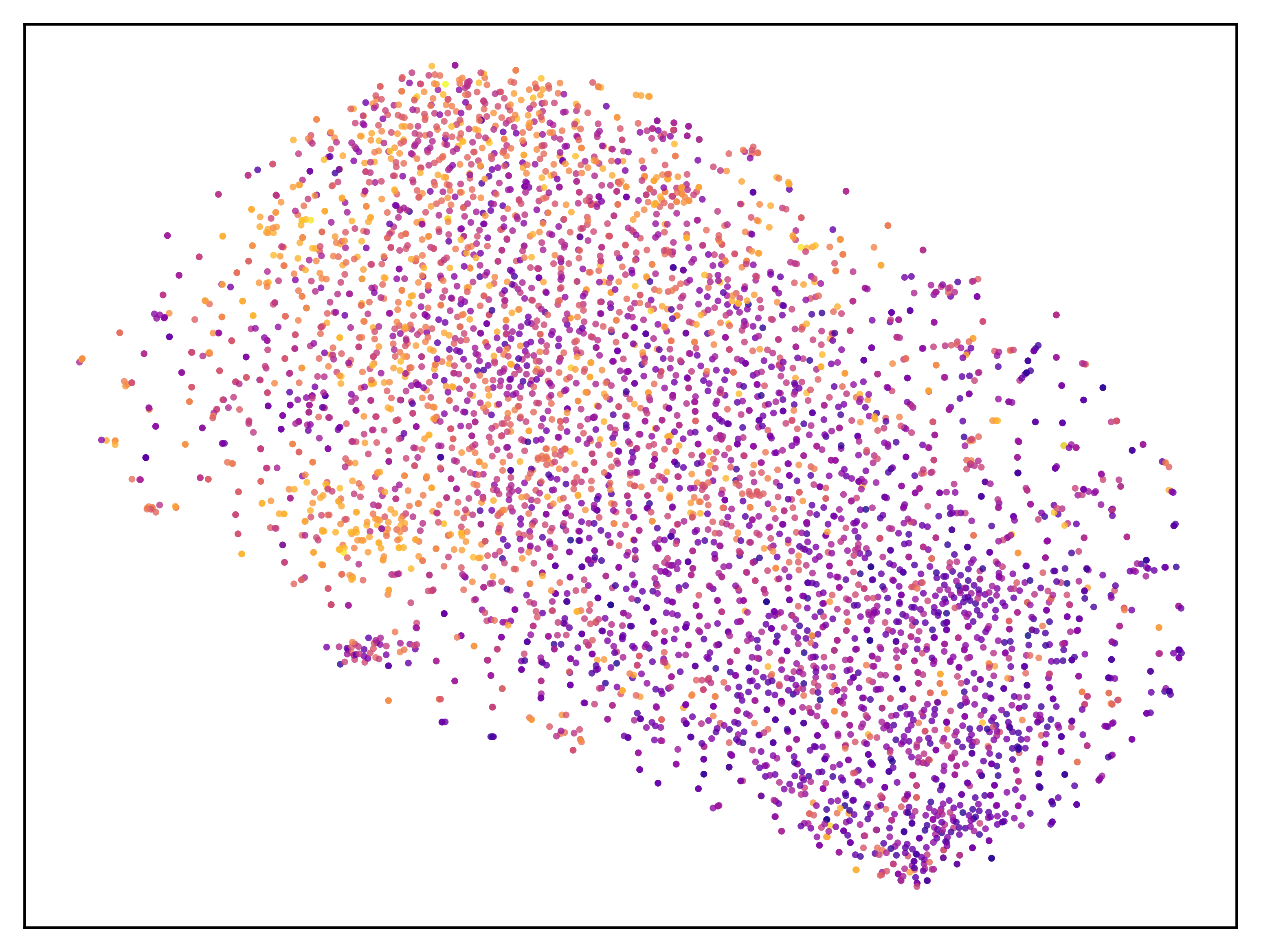} &
\includegraphics[width=.3\linewidth]{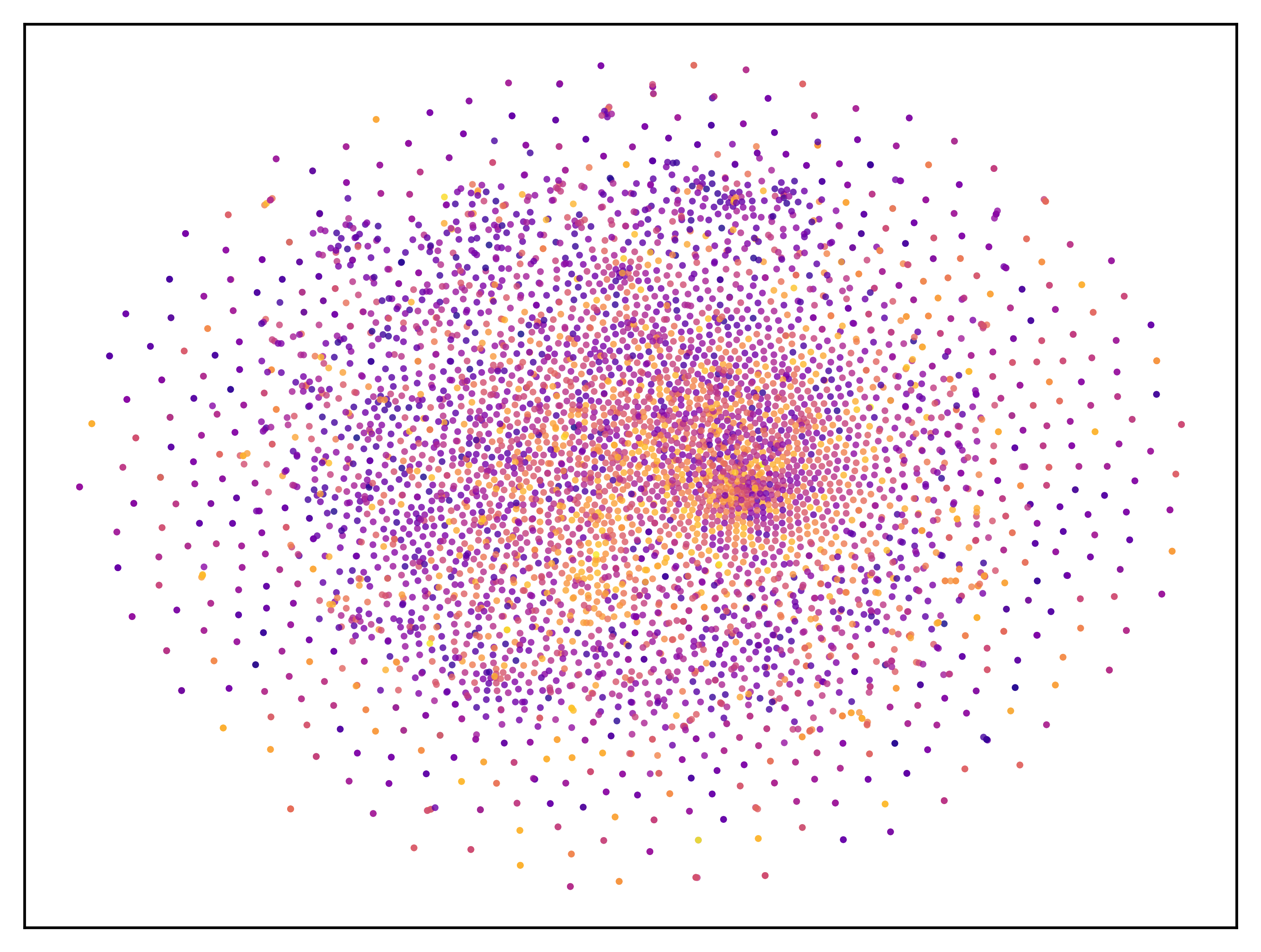} \\
\end{tabular}
\caption{T-SNE plots of FLIP training datasets using different distance metrics. Colors correspond to different target values for thermostability in the regression task. HID shows both global and local structure, STID shows local structure and ID shows little apparent structure.}
\label{fig:tsne-flip-human-cell}
\end{figure}

\section{Discussion}\label{sec:discussion}
\fix{We introduced the harmonic indel distance and showed that it defines a distance metric. We showed that the harmonic indel distance outperforms the unnormalized indel distance on two biomedical sequence regression and classification tasks while showing comparable performance to a normalized version of the indel distance. Our experiments on planar embeddings with t-SNE show that HID and STID can in some, but not all, cases result in different planar embeddings.
}
The original motivation for the development of the HID was classification of web browsing histories, which involves significantly shorter data with more variation in the length of sequences, where the lack of normalization in ID is expected to have an even larger impact. It is striking that there are still such large differences in performance on datasets where the median length is in the hundreds of characters, and it could be interesting to evaluate the HID on shorter sequences, for example within social sequence classification. Unfortunately, as of the time of submitting this manuscript, we were not aware of any suitable benchmark datasets with baselines containing short sequences with high variability in sequence length.

\subsubsection*{Acknowledgments}
The author acknowledges funding received under European Union’s Horizon Europe Research and Innovation programme under grant agreement No.~101070408.

\FloatBarrier

\bibliography{references}
\bibliographystyle{tmlr}

\clearpage
\appendix
\section{Experimental Details}
The SVMs used in the experiments have regularization parameter $C$ and the kernels used are $k(A, B) = e^{-\gamma \, \operatorname{dist}(A, B)^2}$ where $\operatorname{dist}$ is either HID or ID and with the values for $C$ and $\gamma$ given in Table \ref{table:hyperparameters}.

All t-SNE embeddings used a target perplexity of 30 and were run for the number of iterations given in Table~\ref{table:tsne}.

\begin{table}[!htbp]
\caption{Hyperparameters}
\label{table:hyperparameters}
\begin{center}
\begin{tabular}{ll l l}
\toprule
\textbf{Dataset} & \textbf{Distance Metric} & \textbf{$\gamma$} & \textbf{$C$} \\
\midrule
ncRNA & HID & 8.443116755229333 & 99.5612463103948 \\
ncRNA & STID & 9.91793899262693 & 1.485844877379199 \\
ncRNA & ID & 0.00012069602683643651 & 0.3910622178775264 \\
FLIP Mixed & HID & 2.694680159717171 & 5.303192435782873 \\
FLIP Mixed & STID & 3.4503356492866195 & 1.696222220623828 \\
FLIP Mixed & ID & 0.00014856111427324835 & 2.018895367718889 \\
FLIP Human & HID & 3.0711811333241985 & 8.972094031636633 \\
FLIP Human & STID & 2.2338162722360013 & 2.472937362472116  \\
FLIP Human & ID & 0.00010361128449343376 & 0.6548096783807119 \\
FLIP Human-Cell & HID & 2.211198503980429 & 5.994311048805454 \\
FLIP Human-Cell & STID & 4.758786457584244 & 29.469372386529603 \\
FLIP Human-Cell & ID & 0.00010519254700762129 & 0.02074221663631553 \\
\bottomrule
\end{tabular}
\end{center}
\end{table}

\begin{table}[!htbp]
\caption{Hyperparameters and statistics for t-SNE plots}
\label{table:tsne}
\begin{center}
\begin{tabular}{ll l l}
\toprule
\textbf{Dataset} & \textbf{Distance Metric} & \textbf{Iterations} & \textbf{KL Divergence} \\
\midrule
ncRNA & HID & 2048 & 1.876777 \\
ncRNA & STID & 2048 & 1.889551 \\
ncRNA & ID & 2048 & 2.277497 \\
FLIP Mixed & HID & 8192 & 2.349853 \\
FLIP Mixed & STID & 8192 & 2.582142 \\
FLIP Mixed & ID & 8192 & 3.670377 \\
FLIP Human & HID & 8192 & 1.968790 \\
FLIP Human & STID & 8192 & 2.299917 \\
FLIP Human & ID & 8192 & 3.546993 \\
FLIP Human-Cell & HID & 8192 & 1.891862 \\
FLIP Human-Cell & STID & 8192 & 2.250279 \\
FLIP Human-Cell & ID & 8192 & 3.459189 \\
\bottomrule
\end{tabular}
\end{center}
\end{table}
\clearpage
\subsection{Effect of perplexity parameter on t-SNE plots}
To validate that the t-SNE plots in the main paper are representative, we here show the plots resulting from a parameter sweep of the perplexity parameter from $2^{-2}$ to $2^12$ for the Human Cell dataset from FLIP. The plots show that our observation that HID preserves more global structure in the t-SNE plots than STID is robust to different parameter values. All embeddings were run to convergence.

The resulting plots are included as a separate file in the supplementary material.





\end{document}